\begin{document}

\CopyrightYear{2016} 

\setcopyright{acmlicensed}

\conferenceinfo{HSCC'16,}{April 12 - 14, 2016, Vienna, Austria}

\isbn{978-1-4503-3955-1/16/04}\acmPrice{\$15.00}

\doi{http://dx.doi.org/10.1145/2883817.2883839}

\title{Temporal Logic as Filtering}
%
%
%
%
%

\numberofauthors{4} 
\author{
	\alignauthor Alena Rodionova\\
	\affaddr{TU Wien}\\
	\affaddr{Treitlstrasse 3}\\
	\affaddr{Vienna, Austria}\\ 
	\email{alena.rodionova@tuwien.ac.at}
	\alignauthor Ezio Bartocci\\
	\affaddr{TU Wien}\\
	\affaddr{Treitlstrasse 3}\\
	\affaddr{Vienna,\\ Austria}\\
	\email{ezio@cps.tuwien.ac.at}
	\alignauthor Dejan Nickovic\\
	\affaddr{Austrian Institute of Technology}\\
	\affaddr{Donau-City-Strasse 1}\\
	\affaddr{Vienna, Austria}\\
	\email{dejan.nickovic@ait.ac.at}
	\and
	\alignauthor Radu Grosu\\
	\affaddr{TU Wien}\\
	\affaddr{Treitlstrasse 3}\\
	\affaddr{Vienna, Austria}\\
	\email{radu.grosu@tuwien.ac.at}
}

\maketitle
\begin{abstract}
	We show that metric temporal logic (MTL)
	can be viewed as linear time-invariant
	filtering, by interpreting addition, multiplication, and their
	neutral elements, over the idempotent dioid ($\max$,$\min$,0,1). 
	Moreover,
	by interpreting these operators over the field of reals
	(+,$\times$,0,1), one can associate various quantitative semantics
	to a metric-temporal-logic formula, depending on the filter's 
	kernel
	used: square, rounded-square, Gaussian, low-pass, band-pass, or
	high-pass. This remarkable connection between filtering and metric
	temporal logic allows us to freely navigate between the two, and 
	to
	regard signal-feature detection as logical inference.
	To the best of our knowledge, this connection has not been
	established before. We prove that our qualitative, filtering
	semantics is identical to the classical MTL semantics. We also
	provide a quantitative semantics for MTL, which measures the
	normalized, maximum number of times a formula is satisfied within
	its associated kernel, by a given signal. We show that this
	semantics is sound, in the sense that, if its measure is 0, then 
	the
	formula is not satisfied, and it is satisfied otherwise. We have
	implemented both of our semantics in Matlab, and illustrate their
	properties on various formulas and signals, by plotting their
	computed measures.
	
\end{abstract}

%
%
%
%



\section{Introduction}
\label{sec:Introduction}
Starting with natural sciences, such as, chemistry, physics, and
mathematics, and ending with the applied sciences, such as, mechanical
engineering, electrical engineering, and computer science, the
process of \emph{filtering} plays a central role.
In each of the above-mentioned domains, it takes a signal $u$ as
input, and it produces a signal $y$ as output, where the components of
$u$ satisfying some given property are removed.

For example, in chemistry, a filter $f$ may remove particular
molecules from a given solution. In optics (physics) and electrical
engineering, $f$ may remove particular frequencies of $u$, as in a
low-pass, high-pass or band-pass filter. Finally, in computer science
(eg.~in functional programming), $f$ may remove the elements in a list
that satisfy a predicate $p$.

If the relation between $u$ and $y$ is linear, that is, if $f(u)$ is a
linear function, the filter is called \emph{linear}, otherwise it is
called \emph{nonlinear}. Moreover, if the operation of $f$ depends
only on the values of $u$ and not on time, the filter is called
\emph{time invariant}. The most commonly used filters, in all of the
above areas, are the linear, time-invariant (LTI) filters.

Intuitively, an LTI filter operates by sweeping a \emph{kernel}
distribution $k(s)$ over the entire domain of the lagged input signal
$u(t{-}s)$, performing for every $s$ the multiplication
$u(t{-}s)k(s)$, and than summing up (or integrating) the results, in
order to obtain the value of the output signal $y(t)$ at time $t$. One
says that $y\,{=}\,u{*}k\,{=}\,k{*}u$ is the \emph{convolution} of $u$
and $k$.

If one interprets $(+,\times,0,1)$ over the \emph{field} of reals,
then filtering a discrete- or continuous-time \{0,1\}-valued signal
$u$ results in a signal $y$ ranging over the reals. One can speak in
this case about a \emph{quantitative semantics} of the LTI filter:
\begin{equation}
  y(n)\,{=}\sum\limits_{i\,{=}\,0}^{\infty} u(n-i)k(i),
  \quad
  y(t)\,{=}\int\limits_{0}^{\infty} u(t-s)k(s)\,ds.
\end{equation}
An important aspect of an LTI filter $f$ is that, the kernel $k$ is
the response $f(\delta(s))$ of the filter to an impulse $\delta$
placed at the origin. In discrete time, $\delta$ is the Kronecker
function, while in continuous time, $\delta$ is the Dirac
distribution.

If one interprets $(+,\times,0,1)$ over the ($\max$,$\min$,0,1)
\emph{idempotent dioid} \cite{GondGDS}, then filtering a discrete- or 
continuous-time
\{0,1\}-valued signal $u$, results in a \{0,1\}-valued signal $y$. In
this case, one can speak about a \emph{qualitative semantics} of the
LTI filter. The filter has in this case the following form:
\begin{equation}
y(t)\,{=}\sup\limits_{s\,{=}\,0}^{\infty} \min(u(t-s),\, k(s)).
  \label{eqn:lConv}
\end{equation}
This qualitative semantics can be readily mapped to either linear
temporal logic (LTL), or to \emph{metric temporal logic (MTL)}, by
using discrete time for LTL, discrete or continuous time for MTL, and
choosing a \emph{rectangular window} distribution $k$. In particular,
(\ref{eqn:lConv}) will represent the \emph{finally operator}. The
other LTL/MTL operators can also be defined, either by duality, or by
choosing appropriate kernels $k$.

To the best of our knowledge, this remarkable correspondence between 
LTI filters and MTL has not been established before.
This correspondence allows us to freely navigate between logic and
signal processing, and to understand signal analysis, such as, feature
detection, as a logical inference. It also allows us to equip MTL with
various quantitative semantics, depending on the LTI-filter-kernel $k$
used: From square, to rounded, or to even a Gaussian windows, or from
low-pass, to band-pass, or to even high-pass-filter kernel.

We prove that our qualitative, LTI-filtering semantics is identical to
the classical MTL semantics. We also provide:

\emph{A quantitative semantics for MTL, which measures the normalized,
  maximum number of times a formula is satisfied within its associated
  kernel, with respect to a given signal}.

We show that this quantitative semantics is sound, in the sense that,
if its measure is 0, then the formula is not satisfied, and it is
satisfied otherwise. We implemented both of our semantics in Matlab,
and illustrate their properties on various formulas and signals by
plotting their measures.

It is important to note that an LTI filter is called \emph{causal} if
the value of $u$ is known for the entire kernel $k$ at the time the
multiplication is performed, and otherwise \emph{not-causal}. This can
be easily accomplished for signals with bounded domain. This
restriction corresponds to bounded MTL, too. In the rest of the paper
we will stick to this restriction.

The rest of the paper is organized as follows. In
Section~\ref{sec:related_work} we discuss related work. In
Section~\ref{sec:mtl} we revise  MTL, and in Section~\ref{sec:filt}
we revise LTI filters. In Section~\ref{sec:mtlAsFiltering} we show
that MTL semantics corresponds to a qualitative semantics in terms of
LTI filters, and then also provide associated quantitative semantics,
depending on the chosen kernel. Finally in
Section~\ref{sec:conclusions} we draw conclusions and discuss future
work.

\section{Related Work}
\label{sec:related_work}
Linear temporal logic (LTL)\cite{ltl}, and metric temporal logic (MTL)\cite{mtl}, have proven to be concise and elegant formalisms for
rigorously specifying a required temporal behaviour, for a system
under investigation. While in LTL real time is not of concern (one speaks
about logical time), in MTL all the logical formulas are qualified
with a time interval (window), during which the system has to satisfy
the corresponding formula. Moreover, while in LTL time is discrete, in
MTL time can be either discrete or continuous. In fact, by restricting
the windows in MTL to $[0,\infty]$, or the length of the signal, one
can obtain LTL or bounded LTL, respectively.

The classical semantics for LTL or MTL is \emph{qualitative}, that is,
it provides a true or false answer, to whether a signal satisfies, or
violates, a given LTL or MTL formula.  Moreover, the signals
themselves, are Boolean, too. With the increasing importance of
analog- or even mixed digital-analog signals, this classical semantics
has been extended to account for real-time, real-valued signals. To
this end, the set of atomic propositions has been extended to also
include relational propositions of the form $p\,{>}\,\theta$, where
$\theta$ is a constant value, representing a threshold. By negation and
conjunction, one obtains all other relational variations.

When monitoring LTL or MTL properties over real-time, real-valued,
possibly noisy signals, the classic, true or false interpretation, has
the limitation that, a small perturbation (also called a jitter), in
either the temporal- or in the value-domain, may affect the overall
verdict.  In the last decade, there has been a concerted effort, to
provide alternative ways to interpret LTL or MTL, by proposing a so
called \emph{quantitative}, or metric semantics for an LTL or MTL
specification. This is often referred to as a \emph{measure} of the
specification.

In~\cite{Rizk2011}, Rizk et.~al propose a quantitative semantics for
an LTL over the reals (LTL($\mathbb{R}$)) that computes, using a
constraint solving algorithm, the domain, that is, all the time points
in which the real variables occurring in a formula, make this formula
true, for a signal under analysis.

In~\cite{Fainekos2009}, Fainekos et~al.~introduce a notion of
\emph{space robustness} for MTL, with numeric predicates interpreted
over real-valued behaviors. The space robustness measures the degree
by which a continuous signal satisfies or violates the specification.
The key idea consists in computing for each moment of time the
distance $x_p(t)\,{-}\,\theta$, and in using min and max, to summarize
these distances. Consequently, min and max replace the Boolean
operators and, and or. The temporal operators globally and eventually,
had to be replaced accordingly, with inf and sup. This quantitative
interpretation of MTL over real-time, real-valued signals, has found a
number of applications in the hybrid systems community, for both the
identification of the uncertain
parameters~\cite{bartocci2015,Donze2010a,Donze2010b} in a system, and
for the falsification analysis of a model~\cite{sankaranarayanan2011}.

In~\cite{Donze2010}, Donz\'e et al.~define a notion of \emph{time
  robustness} for MTL interpreted over dense-real-time, real-valued
signals. The time robustness of a propositional formula 
evaluated at time $t$ measures the distance between $t$ and the 
nearest instant $t'$ in which the proposition switches its value. 
The other logical and temporal operators are computed using the 
same rules as in the space robustness.
In contrast, as shown in Section~\ref{sec:mtlAsFiltering}, we
provide within the same filtering framework, both a qualitative and a
quantitative semantics for MTL interpreted over real-time,
\emph{Boolean-valued} signals. Moreover, our quantitative semantics is
different from time robustness. The latter does not allow to integrate
the truth value of the formulas over time, and hence reason about the
satisfaction rate of a specification.

Other notions of robustness based on accumulation have been in part
explored in the discrete time setting, motivated by the goal to equip
formal verification and synthesis with quantitative objectives.  For
example, in~\cite{Boker2014}, Boker et al. investigate the extension
of LTL with \emph{prefix-accumulation}, \emph{path-accumulation},
\emph{average} and \emph{infinite average} assertions.  However, the
interpretation of this extended logic remains in the Boolean domain.
Another related approach consists in extending temporal logics with
discounting operators~\cite{Alfaro2004,Almagor2015} weighting the
importance of the events according to how late they occur.  In the
context of real-time, the Duration Calculus (DC)~\cite{dc} is an
interval logic containing the special duration operator. This operator
allows to integrate the truth values of state expressions over an
interval of time.  Recently, Akazaki et al.~\cite{Akazaki2015} extend
MTL with \emph{average} temporal operators that quantifies the average
over time of the space and time robustness previously introduced. The
aforementioned logics based on the idea of accumulation share a common
aspect -- they are all equipped with special syntactic constructs that
allow to specify integration of values. This is in contrast to our
work, in which we use convolution to provide the power of value
accumulation directly to the semantics of MTL without the need to
adapt its syntax. Finally, we also note that we are not aware of any
previous studies that relate convolution and filtering to the
specification formalisms described in this section.

\begin{figure}[t]
 \centering
 \epsfig{file=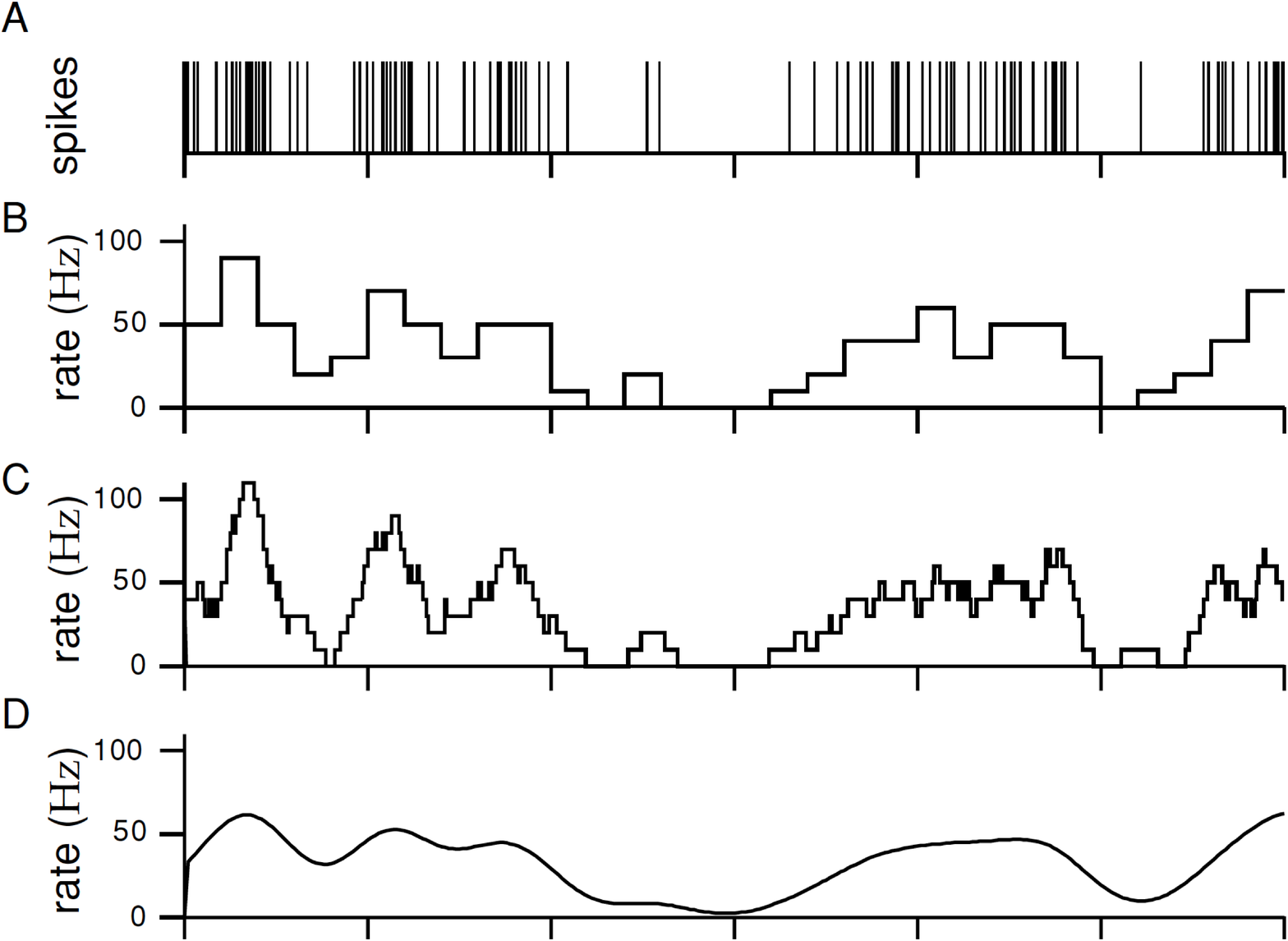, width=\linewidth}
 \caption{Measuring the spiking rate of a neuron.} 
 \label{fig:measuring-spike-rate}
\end{figure}


\section{Metric Temporal Logic}
\label{sec:mtl}

Let $P\,{=}\,\{ p_{1},\ldots,p_{n}\}$ be a set of atomic
propositions. A {\em signal}
$x\,{:}\,\mathbb{T}\,{\rightarrow}\,2^{P}$ over $P$ is a set-valued
function, with time domain $\mathbb{T}$.  In this paper, we only
consider finite-length signals, that is, we restrict their domains to
$\mathbb{T}\,{=}\,[0,T]$, for an arbitrary, finite value $T$. In other
  words, $\mathbb{T}$ is an interval which is either a subset of
  non-negative reals $\mathbb{R}_{\geq 0}$ or naturals
  $\mathbb{N}$. We denote by $x_{p}$ the projection of $x$ to the
  proposition $p$.

Metric Temporal Logic (MTL)~\cite{mtl} is a real-time extension of
Linear Temporal Logic (LTL)~\cite{ltl}.  We consider a bounded variant
of MTL that contains both {\em future} and {\em past} temporal
operators. Its principal modalities are timed until $\until_{I}$ and
since $\since_{I}$, where $I$ is a non-empty interval.

We provide a generic definition of MTL which is consistent with both
the dense and discrete time interpretation of signals. Formally, the
syntax of MTL interpreted over such signals is defined by the
following grammar:
\begin{equation*}
   \varphi := p \mid \neg \varphi \mid \varphi_1 \vee \varphi_2 \mid
  \varphi_1 U_I \varphi_2 \mid \varphi_1 S_I \varphi_2,
\end{equation*}
where $p\,{\in}\,P$ and $I$ is a non-empty interval of the form
$\langle{a, b}\rangle$, such that, the left boundary $\langle$ is
either open ( or closed [, the right boundary $\rangle$ is either open
  ) or closed ], and the boundary values $a,b\in\mathbb{N}$ are
natural numbers with $0\,{\leq}\,a\,{\leq}\,b$.

The satisfaction of a given formula $\varphi$ with respect to a signal
$x$ at time point $i$ is a relation denoted by $(x,i) \models \varphi$
and defined inductively as follows:
\begin{align*}
&(x,i) \models p &&\iff&& x_p[i]=1 \\
&(x,i) \models \neg \varphi &&\iff&& (x,i) \not \models 
\varphi \\
&(x,i) \models \varphi \vee \psi &&\iff&& (x,i) \models 
\varphi \; \textrm{or} \; (x,i) \models \psi \\
&(x,i) \models \varphi \until_I \psi &&\iff&& \exists j \in 
(i + I) \cap \mathbb{T} \; \textrm{:} \;
(x,j) \models \psi \; \\ &&& && \textrm{and} \; 
\forall k\in(i, \,j),\ (x,k) \models \varphi \\
&(x,i) \models \varphi \since_I \psi &&\iff&& \exists j \in 
(i - I) \cap \mathbb{T} \; \textrm{:} \;
(x,j) \models \psi \; \\ &&& && \textrm{and} \; 
\forall k\in(j,\, i),\ (x,k) \models \varphi.
\end{align*}
From the basic definition of MTL, we can derive other standard Boolean
and temporal operators as follows:
$$
\begin{array}{l}
  \top = p \vee \neg p,\quad  \bot =\neg \top,\quad
  \varphi \wedge \psi = \neg(\neg \varphi \vee \neg \psi),\\[1mm]
  \F_I \varphi = \top \until_I \varphi,\quad
  \G_I \varphi = \neg \F_I \neg \varphi,\\[1mm]
  \opP_I \varphi = \top \since_I \varphi,\quad
  \opH_I \varphi = \neg \opP_I \neg \varphi.\\[1mm]
\end{array}
$$
The Finally $\F_{I}$, Globally $\G_{I}$, Once $\opP_{I}$ and
Historically $\opH_{I}$ also admit a natural direct definition of
their semantics:
$$
\begin{array}{lcl}
(x,i) \models \F_{I} \varphi &\iff& \exists j \in (i + I) \cap 
\mathbb{T} \; \textrm{:} \;
(x,j) \models \varphi \\[1mm] 
(x,i) \models \G_{I} \varphi &\iff& \forall j \in (i + I) \cap 
\mathbb{T}, 
(x,j) \models \varphi \\[1mm] 
(x,i) \models \opP_{I} \varphi &\iff& \exists j \in (i - I) \cap 
\mathbb{T} \; \textrm{:} \;
(x,j) \models \varphi \\[1mm] 
(x,i) \models \opH_{I} \varphi &\iff& \forall j \in (i - I) \cap 
\mathbb{T}, 
(x,j) \models \varphi. \\ 
\end{array}
$$

\section{Linear Time-Invariant Filters}
\label{sec:filt}

One of the most fundamental operations in signal processing is
\emph{linear filtering}, that is, the \emph{convolution} of a signal
with an appropriately chosen \emph{kernel} or \emph{window}. From
audio, to image, and to video processing, linear filtering is used to
either transform a signal in an appropriate way (eg.~blur, sharpen) or
to detect its features (eg.~edges, patterns).

In order to understand linear filtering, it is instructive to describe
it first in terms of the most fundamental windowing primitives: the
Kronecker $\delta$ function, for the discrete-time case, and the
Dirac $\delta$ distribution, for the continuous-time case. This 
distribution is defined as follows:
\begin{equation}
  \delta(n)\,{=}\,\begin{cases}
    \infty, & \mbox{if } n=0\\ 
    0, & \mbox{otherwise},
  \end{cases}
  \qquad
  \int\limits_{-\infty}^{+\infty} \delta(s)\,ds\,{=}\,1.
\end{equation}
It is interesting to note, that Kronecker $\delta$ is a function,
whereas Dirac $\delta$ is not, that is, it makes sense only within an
integral. Starting with Cauchy and up to Dirac, it took a
long time to properly define this distribution ~\cite{diracD, 
cauchyD}.

The main motivation for developing the impulse windowing primitive,
was the ability to extract the instantaneous action of a function at a
particular moment of time. For example, the impulse provided by a
baseball bat when hitting the ball. This fundamental ability, allows
to describe any discrete (or continuous) signal $x$ as an infinite
summation (or integral):
\begin{equation}
  x(n)\,{=}\sum\limits_{i\,{=}\,-\infty}^{\infty} x(n{-}i)\delta(i),
  \quad
  x(t)\,{=}\int\limits_{-\infty}^{+\infty} x(t{-}s)\delta(s)\,ds.
\end{equation}
This infinite sum (or integral) is denoted as $x\,{*}\,\delta$ and it
is called the \emph{convolution} of $x$ and $\delta$. Convolution is
commutative, that is, $x{*}y\,{=}\,y{*}x$. Since $x\,{=}\,x{*}\delta$,
convoluting $x$ with $\delta$ is an identity transformation. Its main
appeal, is that it allows to express any linear, time-invariant (LTI)
function $f$, as the convolution of its response $y$ to an impulse
$\delta$ at the origin, with its (lagged) input $u$.  For the
discrete-time case (the continuous-time case is analogue), one can
write:
\begin{equation}
  f(\sum\limits_{i\,{=}\,-\infty}^{\infty} u(n{-}i)\delta(i))\,{=}\,
  \sum\limits_{i\,{=}\,-\infty}^{\infty} u(n{-}i)f(\delta(i)),
\end{equation}
where $y(i)\,{=}\,f(\delta(i))$ is the response of $f$ to $\delta$ at
time zero. This response can be seen as the filtering kernel, and
therefore $f$, as a linear, time-invariant filter. The popularity of
such filters stems from the fact that the response of $f$ to an
impulses $\delta$ at the origin is easy to determine experimentally.

\begin{figure}
 \centering
 \epsfig{file=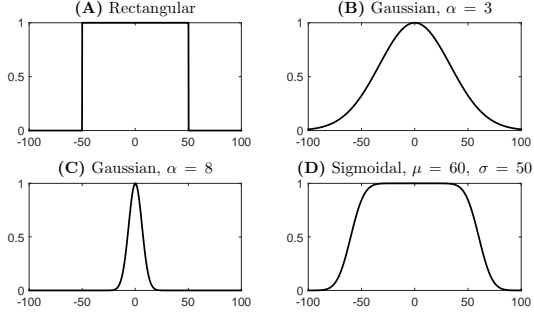, width=\linewidth}
 \caption{Windowing function types}
 \label{windfun}
\end{figure}
In order to motivate the use of various kernels (windows), and to
establish a link between LTI-filtering and the semantics of MTL's
\emph{finally operator}, we borrow an example from~\cite{abottBook}:
measuring the firing rate of a spiking neuron.

If one ignores the duration and the pulse-like shape of a spike (also
called an action potential), a spike sequence $\rho$ can be simply
characterized as a sum of Dirac distributions:
\begin{equation}
  \rho(t)\,{=}\,\sum\limits_{i\,{=}\,0}^{n} \delta(t{-}t_i),
  \qquad \quad
  r\,{=}\,\frac{1}{T}\int\limits_{0}^{T} \rho(s)\,ds,
\end{equation}
where $t_i$ is the time where the neuron generates a spike. The
normalized integral $r$ of $\rho(t)$ over an interval [0,T], is
called the \emph{spike-count rate} for that interval. 

Now suppose one would like to approximate the spike rate of a neuron
from an experimentally-obtained spike train. For example,
Figure~\ref{fig:measuring-spike-rate}(A) shows the spike train
obtained from a monkey's inferotemporal-cortex neuron, while watching
a movie.  There are various ways to computing this rate.

A very simple-minded way is to divide the signal's time domain in a
set of disjoint bins, say of length $\Delta{t}$, count the number of
spikes in each bin, and divide by $\Delta{t}$. For example,
Figure~\ref{fig:measuring-spike-rate}(B) shows this approach, for
$\Delta{t}\,{=}\,100$ms. The result is a discrete (in multiples of
$1/\Delta{t}$), piecewise constant, function. Decreasing $\Delta{t}$
increases the temporal resolution, at the expense of the rate
resolution. Moreover, the bin placement has an impact on the computed
rates, too.
%
%
%

One can avoid an arbitrary bin-placement, by taking only one bin (or
window) of duration $\Delta{t}$, slide it along the spike train, and
then counting for each position $t$, the number of spikes within the
bin. This approach is shown in
Figure~\ref{fig:measuring-spike-rate}(C), where the window size is
$\Delta{t}\,{=}\,100$ms. A centered discrete-time window can be
defined as follows:
\begin{equation}
  w_{\Delta{t}}(n)\,{=}\,
  \frac{1}{\Delta{t}}\sum\limits_{i\,{=}\,-\Delta{t}/2}^{+\Delta{t}/2} \delta(n{-}i).
\end{equation}

A centered continuous-time window of size $\Delta{t}$ can be
defined as follows (this would work in discrete-time, too):
 \vspace*{+1.5ex}
\begin{equation}
  w_{\Delta{t}}(t)\,{=}\,\begin{cases}
     1/\Delta{t}, & \mbox{if } -\Delta{t}/2\,{\leq}\,t\,{\leq}\,\Delta{t}/2\\ 
     0, & \mbox{otherwise}.
  \end{cases}  
\end{equation}

\begin{figure}[t]
\centering
\epsfig{file=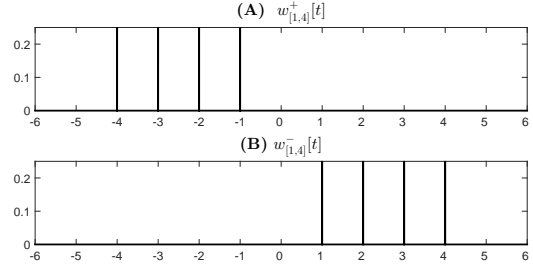, width=\linewidth}
\vspace*{-5ex}
\caption{Rectangular windows  $w^+_{[1,4]}[t]$ and $w^-_{[1,4]}[t]$}
\label{rectwin}
\end{figure}
 \vspace*{+1.5ex}
The bin sliding along the spike train, the counting, and the
normalization with $\Delta{t}$, can be succinctly expressed in terms 
of
convolution, either in discrete or continuous time:
\begin{equation}
  r(n)\,{=}\sum\limits_{i\,{=}\,0}^{T} \rho(i)w_{\Delta{t}}(n{-}i),
  \quad
  r(t)\,{=}\int\limits_{0}^{T} \rho(s)w_{\Delta{t}}(t{-}s)\,ds,
  \label{eq:spike-rate}
\end{equation}
where $[0,T]$ is the domain of the spike train $\rho(t)$. If $\rho(t)$
is defined to be 0 outside $[0,T]$ one can take the sum (or the
integral) over $[-\infty,+\infty]$.  The convolution equation defines
a discrete (and continuous) \emph{linear filter} with \emph{kernel}
$w_{\Delta{t}}$. Note that the rates $r(t)$ at times less than one bin
apart are correlated, because they involve common spikes. Moreover,
the discontinuous nature of $r(t)$ is caused by the discontinuous
nature of the window $w_{\Delta{t}}$ at its borders.

One can smoothen $r(t)$ and reduce the correlation between bins, by
using a Gaussian window $N(0,\sigma)$ with $\sigma\,{=}\,100$ms, as
shown in Figure~\ref{fig:measuring-spike-rate}(D). A Gaussian is
defined as follows:
\begin{equation}
  N(\mu,\sigma)(t)\,{=}\,
  \frac{1}{\sigma\sqrt{2\pi}}e^{-\frac{(t-\mu)^2}{2\sigma^2}}.
\end{equation}
In fact, one could use any windowing function, as long as the area
below the window (the integral) is one.

The spike train $\rho(t)$ discussed above could as well represent the
truth-values of a discrete-time MTL signal $x$. Alternatively,
connecting equal, successive values in $\rho(t)$, would result in the
truth-values of a continuous-time MTL signal $x$. Hence, the
LTI-filters above can be understood as a \emph{quantitative semantics}
for the MTL operators finally and once interpreted over
$x$: \emph{The percentage of satisfaction of $x$ within the associated
  window}. The size and the shape of the windows allow (as shown in
the next section) to introduce a \emph{temporal jitter}
with respect to a satisfaction. Moreover, considering only square windows, and
interpreting $(+,\times,0,1)$ over the \emph{idempotent dioid}
$(\max,\,\min,\,0,\,1)$ gives a logical, \emph{qualitative} semantics to the
LTI-filter as the classic finally operator.

\section{MTL as LTI-Filtering}
\label{sec:mtlAsFiltering}
\begin{figure}[tp]
	\vspace*{-3ex}
	\centering
	\epsfig{file=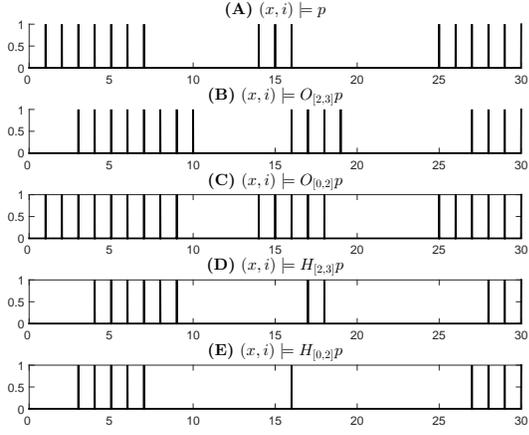, width=\linewidth}
	\vspace*{-6ex}
	\caption{Qualitative discrete-time semantics.}
	\label{fig:qualitative_FGd}
\end{figure}

As it has already been mentioned before, all LTI-filters are
characterized by the particular type of the kernel (window).  Although
there are a lot of such kernels, the main idea is the same: a kernel
function should have a bounded support, that is, a bounded domain
where the function is not zero.

General definitions allow such kernels to go sufficiently rapid
towards zero. The simplest finite support window is when the function
is constant inside of some interval, and zero elsewhere. This function
is called a \emph{rectangular window} or ``Boxcar'' (see
Figure~\ref{windfun}). A smooth Gaussian window is the second type of
a window, because it extends to infinity. So it should be truncated at
the ends of the window. The main advantage of such a function is its 
smooth nature.

Let us start with a rectangular window definition:
%
   for a given continuous-time MTL formula with constrained time
   interval $I\,{=}\,\langle{a},b\rangle$, a rectangular window
   function $w_I(t) :\mathbb{R} \rightarrow [0,1]$ is constructed as
   follows:
   \begin{equation}
    w^{+}_I(t)\,{=}\,\begin{cases}
    \frac{1}{|I|}, & \mbox{if } t\in \tilde{I} \\
    0,  & \mbox{if } t\not\in \tilde{I},
    \end{cases}
    \quad
    w^{-}_{I}(t)\,{=}\,\begin{cases}
             \frac{1}{|I|}, & \mbox{if } t\in I \\
             0,  & \mbox{if } t\not\in I,
    \end{cases}
   \end{equation}
   where $\tilde{I}\,{=}\,\langle{}{-}b, {-a}\rangle{}$ is used for
   the future MTL operators $\F_{I}$ or $\G_{I}$ and
   $I\,{=}\,\langle{}a,b\rangle{}$ for the past MTL
   operators $\opP_I$ or $\opH_I$.
   If $I\,{=}\,[a,a]$ is singular, then $w_{[a,a]}$ is defined as
   the Dirac-$\delta$ distribution. We will denote
   this window by $w_{\{a\}}$.
      
   For discrete-time, open intervals can always be replaced with
   closed intervals. Hence we restrict $I$ to closed intervals,
   only. In this case, we define $w_I$ in terms of Kronecker-$\delta$,
   as follows:
\begin{equation}
 w^+_I[t]\,{=}\,\frac{1}{|I|}\sum\limits_{i\,{\in}\,I} \delta(t+i), \quad   
 w^-_I[t]\,{=}\,\frac{1}{|I|}\sum\limits_{i\,{\in}\,I} \delta(t-i),\\
\label{def:win}
 \end{equation}
where the ${+}$ and the ${-}$ superscripts correspond to the future and 
past MTL operators, respectively.

Note that for past MTL operators one needs to delay the kernel, so
the sign is negative, and for future MTL operators one has to rush the
kernel, so the sign is positive. For instance, consider the
discrete-time MTL formulas $\F_{[1,4]}\,p$ and
$\opP_{[1,4]}\,p$. The windows used by these formulas:
\begin{equation}
 w^+_{[1,4]}[t]\,=\,\frac{1}{4}\sum\limits_{i=1}^{4}\delta(t+i),\quad
 w^-_{[1,4]}[t]\,=\,\frac{1}{4}\sum\limits_{i=1}^{4}\delta(t-i),
\end{equation}
are shown graphically in Figure~\ref{rectwin}(A) and (B), respectively.

\newdef{remark}{Remark} \newdef{exmp}{Example}

\subsection{Qualitative Semantics}
\label{sec:qualit}
 \begin{figure}[tp]
 	\centering
 	\epsfig{file=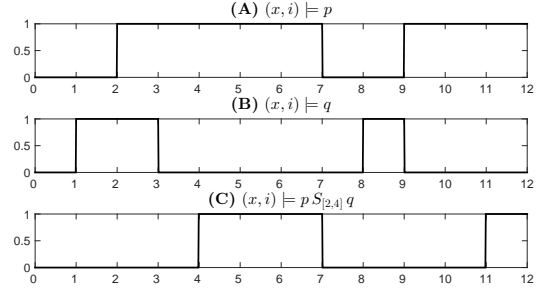, width=\linewidth}
 	\vspace*{-4ex}
 	\caption{Continuous-time semantics for $p\since_{[2,4]}{q}$}
 	\label{fig:qualitative_Sc}
 \end{figure}
In this section, we show that all temporal operators of MTL can be
defined in terms of LTI-filtering. For this purpose, we interpret
addition, multiplication, and their associated neutral elements over
$(\{0,1\}, \max, \min, 0, 1)$ the \emph{max-min idempotent dioid} 
with value
domain \{0,1\}. In this dioid one can define the \emph{complement} 
$\neg{v}$ of a value $v$ as $1{-}v$.
Moreover, the normalization factor in the definition of the window 
function  
$w_I(t)$ should be omitted in order to preserve Boolean values. 
\subsubsection{Discrete-time semantics}

In the discrete-time domain, any signal is bounded (remember that we
only consider boun\-ded MTL), and therefore all the intervals of the 
MTL
operators are bounded, too. As a consequence, the number of discrete
points in each interval is finite, and we can extend the binary 
$\max$
operator to an \emph{n-ary max operator}, which we can use for the
convolution integral. Thus, the discrete-time semantics of MTL can be
formulated as below:
 \begin{align*}
 &(x,i) \models p && \iff && x_p[i] \\
 &(x,i) \models \neg \varphi && \iff && 
 1 - (\,(x,i) \models\varphi\,)\\
 &(x,i) \models \varphi \vee \psi && \iff &&
 \max\left((x,i) \models \varphi,\,(x,i) \models \psi\right)
 \\ 
 &(x,i) \models {\F}_{I} \varphi && \iff &&
 \max_{j\,\in\,\mathbb{T}} \min\left((x,j)\models\varphi, \,
 w^{+}_I[i\,{-}\,j]\right) \\
 &(x,i) \models {\G}_{I} \varphi && \iff &&
 1 - (\,(x,i) \models {\F}_{I} \neg \varphi\,)
 \\ 
 &(x,i) \models {\opP}_{I} \varphi && \iff &&
 \max_{j\,\in\,\mathbb{T}} \min\left((x,j)\models\varphi,\,
 w^{-}_I[i\,{-}\,j]\right)\\
 &(x,i) \models {\opH}_{I} \varphi && \iff &&
 1 - (\,(x,i) \models {\opP}_{I} \neg \varphi\,)
 \\
 &(x,i) \models \varphi \until_I \psi && \iff && 
 \max_{j\,\in\,I}\min\Big(
 (x, i)\models {\G}_{[1,j-1]}\varphi,\,\\
 &&&&&
  (x, i)\models {\F}_{\{j\}}\psi\Big) \\
 &(x,i) \models \varphi \since_I \psi && \iff &&
 \max_{j\,\in\,I}\min\Big(
 (x, i)\models {\opH}_{[1,j{-}1]}\varphi,\, \\
 &&&&&
 (x, i)\models {\opP}_{\{j\}}\psi\Big).
 \end{align*}
In fact, we can define the entire semantics of MTL 
in terms of LTI-filtering, by using the Kronecker-$\delta$ kernel:

\begin{align*}
  &(x,i) \models p && \iff &&
  \max_{j\,\in\,\mathbb{T}} \min\left(x_p[j], \, \delta(i-j)\right)\\
  &(x,i) \models \neg \varphi &&\iff &&
  \max_{j\,\in\,\mathbb{T}}\min\left(1 - ((x,j) \models 
  \varphi),\delta(i-j)\right) \\
  &(x,i) \models \varphi \vee \psi && \iff &&
  \max_{j\,\in\,\mathbb{T}} \min\Big(\max\big(\\
  &&&&&
  (x,j) \models \varphi,\,(x,j) \models\psi\big),\,\delta(i-j)\Big).
\end{align*}

In order, to illustrate the qualitative, discrete-time
semantics for various temporal operators, consider the example shown
in Figure~\ref{fig:qualitative_FGd}. In
Figure~\ref{fig:qualitative_FGd}(A) we show the values $x_p$ of
discrete-time signal $x$, for which it satisfies proposition $p$.
This signal is defined over the interval $[0, 30]$, with
$\Delta{t}\,{=}\,1$. In other words, the domain of $x$ is
$\{0,1,\ldots,30\}$.

In order to interpret the MTL operators $\opP$ and $\opH$ over $x$, 
we
use two indexing windows $I$, for both: The first is $I=[2,3]$, and 
the
second is $I=[0, 2]$. To simplify the caption in the Matlab figures,
we use the textual representation of the temporal operators: F
(finally) for $\F$, G (globally) for $\G$, O (once) for $\opP$, and H
(historically) for $\opH$.

The satisfaction of the MTL formulas $\opP_{[2,3]}p$ and
$\opP_{[0,2]}p$ for the signal $x_p$ are shown in
Figure~\ref{fig:qualitative_FGd}(B-C). Similarly, the satisfaction of
$\opH_{[2,3]}p$ and $\opH_{[0,2]}p$ for the signal $x_p$ are shown
in Figure~\ref{fig:qualitative_FGd}(D-E). The dependence on the 
length
of the interval $I$ is precise:  Increasing the length, increases the
number of points where the operator $\opP_{I}p$ is satisfied and
$\opH_{I}p$ is not. Moreover, the shift of the interval's beginning
point leads to the same shift of the output signal.

For the qualitative, discrete-time semantics of an MTL formula
$\varphi$ with respect to a signal $x$, it is straight forward to
show, that the semantics is sound, in the sense that, if it produces
the value 0, then the formula is not satisfied in the classical
semantics, and if it produces 1 then it is satisfied.  Let us index
satisfaction with $C$ as classic semantics, and with $L$ as
LTI-filtering semantics in the following theorem.
\newtheorem{theorem}{Theorem}
\begin{theorem}[Soundness]
	For an MTL formula $\varphi$ and a 
	discrete-time signal $x$, it holds that $x$ satisfies $\varphi$ 
	in the LTI-filtering interpretation, if and only if, it satisfies 
	$\varphi$ in the classic interpretation. More formally one can 
	write:
	\vspace*{+2ex}
  \begin{align*}
  (x,i) \models_L \varphi &\iff
  (x,i) \models_C \varphi.
  \end{align*}
  \label{thm:qualiDTs}
\end{theorem}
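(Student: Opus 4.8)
The plan is to prove both directions at once by \emph{structural induction} on $\varphi$, showing at every node that the $\{0,1\}$-valued quantity $(x,i)\models_L\varphi$ equals $1$ exactly when $(x,i)\models_C\varphi$ holds. The induction hypothesis is that this equivalence already holds, at every time point, for all strict subformulas of $\varphi$.

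Before the case analysis I would record the key \emph{window-reduction} observation. Dropping the normalization factor as prescribed for the qualitative semantics, the Kronecker windows of (\ref{def:win}) satisfy $w^{+}_I[i-j]=1$ iff $j\in(i+I)$, and $w^{-}_I[i-j]=1$ iff $j\in(i-I)$, and both are $0$ otherwise. Since the outer $\max$ ranges over $j\in\mathbb{T}$ while the inner $\min$ with a $0/1$ window annihilates every $j$ outside the shifted interval, each filtering clause collapses to a maximum of $(x,j)\models_L\chi$ taken over $j\in(i\pm I)\cap\mathbb{T}$; in the idempotent dioid such a maximum equals $1$ precisely when one of its arguments does. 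This is the bridge between the algebraic $\max/\min$ and the logical $\exists/\forall$.

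The base case $\varphi=p$ is immediate from $x_p[i]\in\{0,1\}$. For $\neg\varphi$ and $\varphi\vee\psi$ the equivalence follows from the identities $1-v=1\iff v=0$ and $\max(a,b)=1\iff a{=}1\text{ or }b{=}1$, combined with the induction hypothesis. For the intermediate operators I would first treat $\F_I$ and $\opP_I$: the window reduction rewrites $(x,i)\models_L\F_I\varphi$ as $\max_{j\in(i+I)\cap\mathbb{T}}\bigl((x,j)\models_L\varphi\bigr)$, which by the induction hypothesis equals $1$ iff some $j\in(i+I)\cap\mathbb{T}$ satisfies $(x,j)\models_C\varphi$ --- exactly the direct classic semantics of $\F_I$; the argument for $\opP_I$ is the past-time mirror. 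The universal operators $\G_I$ and $\opH_I$ then follow by pushing the complement through the $\max/\min$ via the dioid De Morgan law $1-\max_j\min(v_j,k_j)=\min_j\max(1-v_j,1-k_j)$, which turns the existential into the matching universal quantifier.

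Finally I would address the genuine until/since cases. Here I feed the already-established $\F_{\{j\}}$ and $\G_{[1,j-1]}$ correspondences (legitimate, since they are applied only to the subformulas $\varphi,\psi$, covered by the induction hypothesis) into the definition $(x,i)\models_L\varphi\until_I\psi=\max_{j\in I}\min\bigl((x,i)\models_L\G_{[1,j-1]}\varphi,\,(x,i)\models_L\F_{\{j\}}\psi\bigr)$. For a fixed $j$, the singleton window makes $\F_{\{j\}}\psi$ true at $i$ exactly when $\psi$ holds at $i+j$, while $\G_{[1,j-1]}\varphi$ is true exactly when $\varphi$ holds throughout $[i+1,i+j-1]=(i,i+j)$; the outer $\max$ over $j\in I$ therefore evaluates to $1$ iff there is a witness $j'=i+j\in(i+I)$ with $(x,j')\models\psi$ and $\varphi$ holding on the open prefix $(i,j')$ --- precisely the classic $\until_I$ clause, and the since case is the symmetric past-time argument. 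I expect the main obstacle to lie exactly here: one must verify that the algebraic decomposition reproduces the classic semantics on the \emph{boundary} instances --- when $j=0$ or $j=1$ forces the globally to range over an empty interval (so it is vacuously $1$, matching the empty open prefix $(i,i)$ of the classic definition), and when the shifted window reaches outside the bounded domain $\mathbb{T}$ (so intersection with $\mathbb{T}$ silently discards those indices on both sides). Making these vacuity and domain-truncation cases line up exactly is the delicate bookkeeping that renders the equivalence tight in both directions rather than merely one.
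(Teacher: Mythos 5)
Your proposal is correct and follows essentially the same route as the paper's own (sketched) proof: Boolean connectives coincide by the truth-table agreement of $\max/\min/$complement, $\F_I$ and $\opP_I$ reduce to the classic existential clause because the $0/1$ window restricts the $\max$ to $(i\pm I)\cap\mathbb{T}$, $\G_I$ and $\opH_I$ follow by negation/De Morgan, and $\until_I$, $\since_I$ are handled through their decomposition into the previously treated operators. Your version merely makes explicit what the paper leaves implicit --- the structural induction scaffolding, the window-reduction lemma, and the vacuity/domain-truncation bookkeeping at the boundaries --- which strengthens rather than changes the argument.
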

\vspace*{-5ex}
\begin{proof} (Sketch)
  First, observe that the Boolean operators are interpreted the
  same way, as the max-min dioid provides a logical interpretation 
for
  the set \{0,1\}: the truth tables for conjunction and $\min$,
  disjunction and $\max$, and negation and complement, are all the 
same, respectively. Second, the existential quantifier in the classic
  interpretation, corresponds to max in the max-min dioid, which 
  leads to the same interpretation for $\F_I$ and $\opP_I$. Through 
  the properties of negation, we immediately establish the same
  interpretation for $\G_I$ and $\opH_I$. Finally, the definition of
  $\until_I$ and $\since_I$ can be written solely in terms of the
  previous temporal operators, as we did in the LTI-filter semantics.
\end{proof}

\begin{figure}[tp]
	\vspace*{-3ex}
	\centering
	\epsfig{file=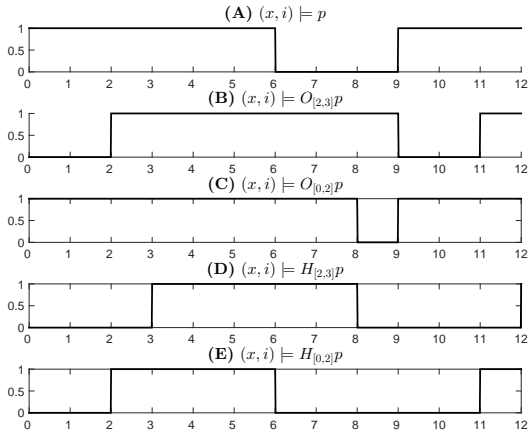, width=\linewidth}
	\vspace*{-6ex}
	\caption{Qualitative continuous-time semantics.}
	\label{fig:qualitative_FGc}
\end{figure}
\subsubsection{Continuous-time semantics}
\begin{figure}[t]
	\vspace*{-3ex}
	\centering
	\epsfig{file=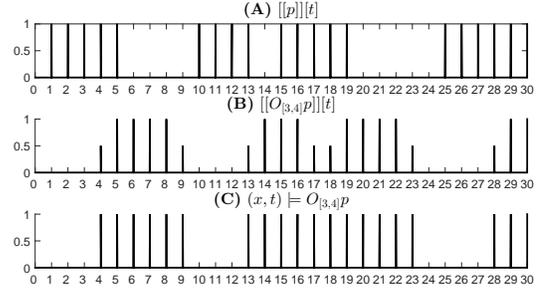, width=\linewidth}
	\vspace*{-6ex}
	\caption{Soundness of discrete-time semantics.}
	\vspace*{-2ex}
	\label{fig:thm_d}
\end{figure}
In the continuous-time semantics, each bounded interval is going to
have an infinite number of points, except for the singular intervals,
which contain only a single point. As a consequence, we have to
interpret the integral in this case, as the \emph{supremum operator},
which is the proper extension of $\max$ over infinite domains.
 \begin{align*}
 &(x,i) \models p && \iff && x_p(i) \\
 &(x,i) \models \neg \varphi && \iff && 
 1 - (\,(x,i) \models\varphi\,)\\
 &(x,i) \models \varphi \vee \psi && \iff &&
 \max\left((x,i) \models \varphi,\,(x,i) \models \psi\right)
 \\ 
 &(x,i) \models {\F}_{I} \varphi && \iff &&
 \sup_{j\,\in\,\mathbb{T}} \min\left((x,j)\models\varphi, \,
 w^{+}_I(i\,{-}\,j)\right) \\
 &(x,i) \models {\G}_{I} \varphi && \iff &&
 1 - (\,(x,i) \models {\F}_{I} \neg \varphi\,)
 \\
 &(x,i) \models {\opP}_{I} \varphi && \iff &&
 \sup_{j\,\in\,\mathbb{T}} \min\left((x,j)\models\varphi,\,
 w^{-}_I(i\,{-}\,j)\right)\\
 &(x,i) \models {\opH}_{I} \varphi && \iff &&
 1 - (\,(x,i) \models {\opP}_{I} \neg \varphi\,)
 \\
 &(x,i) \models \varphi \until_I \psi && \iff && 
 \sup_{j\,\in\,I} \min\Big(
 (x, i)\models {\G}_{(0,j)}\varphi ,\, \\
 &&&&& 
 (x, i)\models {\F}_{\{j\}}\psi\Big)\\
 &(x,i) \models \varphi \since_I \psi && \iff &&
 \sup_{j\,\in\,I}\min\Big(
 (x, i)\models {\opH}_{(0,j)}\varphi,\, \\
 &&&&&
 (x, i)\models {\opP}_{\{j\}}\psi\Big).
 \end{align*}
In order to illustrate the qualitative, continuous-time
semantics for various temporal operators, consider the example shown
in Figure~\ref{fig:qualitative_FGc}. In
Figure~\ref{fig:qualitative_FGc}(A) we show $x_{p}$, the value of
continuous-time signal $x$, for which proposition $p$ is true.  This
signal is defined over the interval $[0, 12]$.  Every sub-interval in
the plot is left-closed and right-open.

In order to interpret the MTL operators $\opP$ and $\opH$ over $x$, 
we
use, the same indexing windows as before: First, $I=[2,3]$ and next,
$I=[0, 2]$. The satisfaction of the MTL formulas $\opP_{[2,3]}p$
and $\opP_{[0,2]}p$ for the signal $x$ are shown in
Figure~\ref{fig:qualitative_FGc}(B-C). Similarly, the satisfaction of
$\opH_{[2,3]}p$ and $\opH_{[0,2]}p$ for the signal $x$ are shown
in Figure~\ref{fig:qualitative_FGc}(D-E).

The qualitative, continuous-time semantics of the since operator
$(x,i)\models p \since_{[2,4]} q$ is shown in
Figure~\ref{fig:qualitative_Sc}. The semantics of $(x,i) \models p$
(signal $x_p$) is shown in Figure~\ref{fig:qualitative_Sc}(A) and
the semantics of $(x,i) \models q$ (signal $x_q$) shown in
Figure~\ref{fig:qualitative_Sc}(B). Finally, the semantics of $(x,i)
\models p \since_{[2,4]} q$ is shown in
Figure~\ref{fig:qualitative_Sc}(C). It should be emphasized that, the
obtained qualitative semantics corresponds to a standard MTL
semantics.
\begin{theorem}[Soundness]
	For an MTL formula $\varphi$ and a continuous-time signal $x$, it 
	holds that $x$ satisfies $\varphi$ in the LTI-filtering 
	interpretation, if and only if, it satisfies $\varphi$ in the 
	classic interpretation. More formally one can write:
   \begin{align*}
   (x,i) \models_L \varphi &\iff
   (x,i) \models_C \varphi.
   \end{align*}
\end{theorem}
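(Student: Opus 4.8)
The plan is to prove the equivalence $(x,i) \models_L \varphi \iff (x,i) \models_C \varphi$ by structural induction on $\varphi$, reusing the discrete-time argument of Theorem~\ref{thm:qualiDTs} wherever the value domain, rather than the time domain, is what matters. The base case $\varphi = p$ and the Boolean cases $\neg\varphi$ and $\varphi \vee \psi$ are literally unchanged: the atomic clause reads $(x,i)\models p \iff x_p(i)=1$ in both semantics, and on the two-element set $\{0,1\}$ the operations $1-(\cdot)$, $\max$ and $\min$ coincide with classical negation, disjunction and conjunction. So the inductive content lies entirely in the temporal operators, where the only genuinely new feature is that the finite $\max$ of the discrete case is replaced by a $\sup$ over a continuum of indices.

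First I would dispatch $\F_I$ and $\opP_I$. Unfolding the definition of $w^+_I$ with the normalization factor dropped, one checks that $w^+_I(i-j) = 1$ exactly when $i - j \in \tilde{I} = \langle -b, -a\rangle$, i.e. when $j \in (i+I)$, and $w^+_I(i-j)=0$ otherwise, since the reflection sends $\langle a,b\rangle$ to $\langle -b,-a\rangle$ and swaps the boundary types so that membership $j \in i+I$ is preserved; the analogous statement holds for $w^-_I$ and $\opP_I$. Hence
\[
\sup_{j\in\mathbb{T}}\min\bigl((x,j)\models\varphi,\, w^+_I(i-j)\bigr)
= \sup_{j\in (i+I)\cap\mathbb{T}} \bigl((x,j)\models_L\varphi\bigr).
\]
The key observation is that, because every quantity here is $\{0,1\}$-valued, the supremum of such a set is $1$ if and only if the value $1$ is actually attained, i.e. if and only if there exists $j \in (i+I)\cap\mathbb{T}$ with $(x,j)\models_L\varphi$. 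By the induction hypothesis this matches $\exists j \in (i+I)\cap\mathbb{T}:(x,j)\models_C\varphi$, which is precisely the classical clause for $\F_I$; the empty-intersection case gives $\sup\emptyset = 0$ on both sides. Thus $\sup$ collapses to existential quantification exactly as $\max$ did in the discrete case. The operators $\G_I$ and $\opH_I$ then follow immediately by duality, since they are defined through negation of $\F_I$ and $\opP_I$ and negation has already been handled.

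It remains to treat $\until_I$ and $\since_I$, which are defined in the filtering semantics purely in terms of the operators just verified. For $\varphi \until_I \psi$ I would reparametrize the classical clause by the offset $j' = j - i \in I$: the requirement $(x,i+j')\models\psi$ is exactly $(x,i)\models \F_{\{j'\}}\psi$, and the requirement $\forall k \in (i,i+j'):(x,k)\models\varphi$ is exactly $(x,i)\models \G_{(0,j')}\varphi$, so that the decomposition $\sup_{j'\in I}\min\bigl(\G_{(0,j')}\varphi,\, \F_{\{j'\}}\psi\bigr)$ reproduces the classical semantics; $\since_I$ is symmetric. Here the open interval $(0,j')$ in the filtering clause is what faithfully encodes the open interval $(i,j)$ appearing in the classical until, so matching the endpoint conventions is essential and must be checked carefully.

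The step I expect to be the main obstacle is giving a rigorous Boolean meaning to the singular window $w_{\{a\}}$, which is declared to be a Dirac-$\delta$: literally $\delta$ is neither $\{0,1\}$-valued nor even a function, so one must argue that in the max-min interpretation it behaves as a point sampler, namely that $\sup_{j\in\mathbb{T}}\min\bigl((x,j)\models\varphi,\,\delta(i-j)\bigr)$ returns the value of $\varphi$ at the single instant $i$ (and $0$ when that instant lies outside $\mathbb{T}$). This sampling property is exactly what lets $\F_{\{j'\}}\psi$ read off $\psi$ at the shifted time $i+j'$, and hence what makes the until/since decompositions go through. By contrast, the move from $\max$ to $\sup$ requires no new hypotheses: since all quantities are bounded and $\{0,1\}$-valued, the supremum always exists and equals $1$ precisely when $1$ is attained, so the continuum of indices introduces no analytic difficulty beyond the correct treatment of $\delta$ and of the open/closed interval conventions. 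Under the paper's standing boundedness assumption, and assuming the usual finite-variability of $x$ so that the classical quantifiers are well-behaved, this completes the induction.
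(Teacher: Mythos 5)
Your proposal is correct and takes essentially the same route as the paper's own proof sketch: structural induction in which the Boolean connectives coincide on $\{0,1\}$, the supremum of a $\{0,1\}$-valued family collapses to existential quantification exactly as the finite $\max$ did in the discrete case, globally/historically follow by duality of negation, and until/since reduce to the operators already verified. You additionally fill in details the paper leaves implicit (the window-reflection bookkeeping and the reading of the singular Dirac-$\delta$ window as a point sampler), which only makes the argument more complete.
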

\begin{proof} (Sketch)
  The proof is very similar to the one for the discrete-time,
  qualitative semantics. The only difference is that we work now with
  windows having dense (infinite) domain, and max have to be 
therefore
  extended to sup, the proper extension on such domains.
\end{proof}

\subsection{Quantitative Semantics}
	\begin{figure}[tp]
		\centering
		\epsfig{file=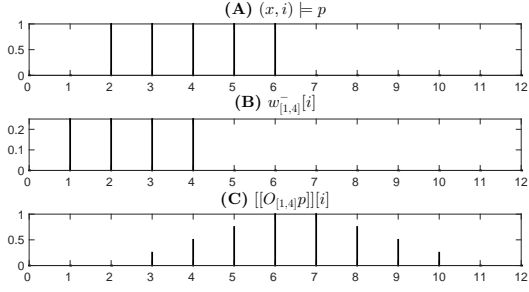, width=\linewidth}
		\caption{Quantitative discrete-time semantics.}
		\label{fig:quantitative_exd}
	\end{figure}

If we interpret addition, multiplication, and their associated 
neutral
elements over $([0,1],+,\times,0,1)$, the \emph{field of reals}
restricted to the interval [0,1], we recover the standard definition
of LTI-filtering. This definition allows us to associate a
\emph{quantitative semantics} to MTL temporal formulas.

This semantics measures the normalized, maximum number of times
the formula is satisfied within its associated window.
Since $\G_I$ and $\opH_{I}$ can be satisfied only once within $I$, 
the
measure returns 1, in case of satisfaction, and 0, otherwise. 
However,
$\F_I$ and $\opP_{I}$ are more interesting, as they may be satisfied
several times in $I$. 
As in Section~\ref{sec:qualit}, we distinguish between discrete- and
continuous-time semantics, and use summations and integrals,
respectively. 

Moreover, we give up the duality property between the $\G_I$ and 
$\F_I$ 
operators (same for past operators $\opH_{I}$ and $\opP_{I}$).
In the rest of the paper we will stick to the \textit{positive normal 
form} representation of an MTL formula, where negations may only 
occur in front of atomic propositions \cite{joelDecMTL}.
\subsubsection{Discrete-time semantics}
Assume we are given a discrete signal with $[0, T]$ duration
interval. Then the quantitative semantics for MTL can be formulated 
as
follows:
\begin{align*}
&\val{ x,p } [i] && = && x_p[i] \\
&\val{ x,\neg p } [i] && = && 1 - \val{ x,p }[i]\\
&\val{ x, \varphi\,{\vee}\,\psi } [i] && = &&
\max(\val{ x,\varphi } [i],\ \val{ x,\psi }[i])\\
&\val{ x, \varphi\,{\wedge}\,\psi } [i] && = &&
\min(\val{ x,\varphi } [i],\ \val{ x,\psi }[i])\\
&\val{ x, {\F}_{I}\varphi } [i]&& = &&
\sum_{j\,{\in}\,\mathbb{T}} \ \val{ x,\varphi}[j]\cdot w^+_I[i-j]\\
&\val{ x, {\opP}_{I}\varphi } [i]&& = &&
\sum_{j\,{\in}\,\mathbb{T}} \ \val{ x,\varphi}[j]\cdot w^-_I[i-j]\\
&\val{ x, {\G}_{I}\varphi } [i]&& = &&
\min_{j\,{\in}\,i+I} \val{ x,\varphi}[j]\\
&\val{ x,{\opH}_{I}\varphi } [i]&& = &&
\min_{j\,{\in}\,i-I} \val{ x,\varphi}[j]\\
&\val{ x,\varphi\until_I\psi }[i]&& = &&
\frac{1}{|I|}\sum_{j\,{\in}\,I} \ 
\val{ x,{\G}_{[1,j-1]}\varphi }[i]\cdot 
\val{ x,{\F}_{\{j\}}\psi    }[i] \\
&\val{ x,\varphi\since_I\psi} [i]&& = &&
\frac{1}{|I|}\sum_{j\,{\in}\,I} \ 
\val{ x,{\opH}_{[1,j-1]}\varphi }[i]\cdot 
\val{ x,{\opP}_{\{j\}}\psi    }[i],
\end{align*}
where the window superscripts $+$ and $-$ correspond to the window
direction, according to Equations~(\ref{def:win}). 

For example, consider the discrete-time signal $x$ over the interval
$[0, 12]$, with $\Delta{t}\,{=}\,1$, shown in
Figure~\ref{fig:quantitative_exd}. Since it is discrete-time, its
domain is $\{0,1,\ldots,12\}$.  In order to interpret the MTL 
operator
$\opP_{[1, 4]}\,p$, we construct the rectangular windowing function
$w^-_{[1, 4]}[i]$ as shown at the beginning of
Section~\ref{sec:mtlAsFiltering}.  The representation of this window
is shown in Figure ~\ref{rectwin}(B).  Since the MTL formula, is a
past formula, the corresponding window is formally defined as 
follows:
	\begin{table}
		\renewcommand{\arraystretch}{1.5}
		\centering
		\caption{Quantitative semantics values}
		\begin{tabular}{|c|c|c|c|c|c|c|c|c|} \hline
			MTL Formula& \multicolumn{8}{c|}{Time point} \\
			\cline{2-9}&3&4&5&6&7&8&9&10\\ \hline
			$\val{\opP_{[1,4]} p}$&$\frac{1}{4}$&$\frac{1}{2}$&
			$\frac{3}{4}$&1&1&$\frac{3}{4}$&$\frac{1}{2}$&$\frac{1}{4}$\\[5pt]
			
			\hline
		\end{tabular}
		\label{quant_values}
	\end{table}

\begin{equation}
  w^-_{[1, 4]}[i]\,{=}\,\frac{1}{4}\sum\limits_{j=1}^{4} \delta(i-j).
\end{equation}
Now we are able to evaluate the quantitative semantics of signal $x$
with respect to the Once-formula given below:
$$
\begin{array}{lcl}
  \val{\opP_{[1,4]} p}\left[i\right] = 
  \sum\limits_{j=0}^{12}p[j]\cdot w^-_{[1,4]}[i-j].
\end{array}
$$
The quantitative-values of the result are given in a 
Table~\ref{quant_values}, and their graphical representation is shown 
in Figure~\ref{fig:quantitative_exd}.

For the quantitative, discrete-time semantics of an MTL formula
$\varphi$ and signal $x$, it is straight forward to show, that the
semantics is sound, in the sense that, it produces a measure
greater than 0, if and only if, the formula is satisfied by the
discrete-time qualitative semantics, and 0, otherwise.
\begin{figure}
	\centering
	\epsfig{file=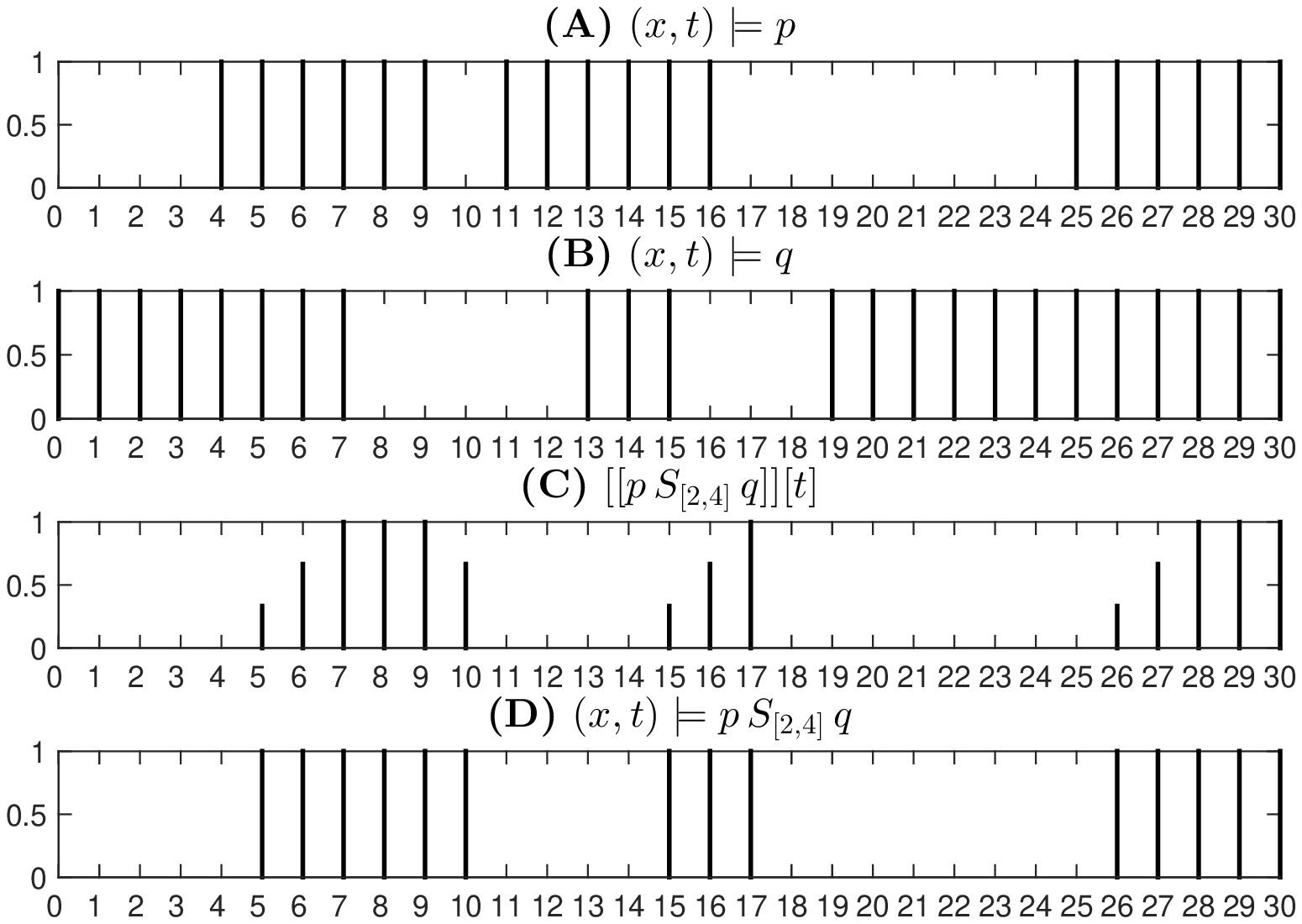, width=\linewidth}
	\caption{Discrete quantitative semantics for $\since_I$.}
	\label{fig:since_d}
\end{figure}

\begin{theorem}[Soundness] 
	For a positive normal form MTL formula $\varphi$ and a 
	discrete-time signal 
	$x$, it holds that $x$ satisfies $\varphi$, if and only if, its 
	quantitative semantics is strictly greater than 0. It does not 
	satisfy 
	$\varphi$, if and only if, its quantitative semantics is 0:
  \begin{align*}
    (x,i) \models\varphi &\iff
    \val{x,\varphi} [i] > 0,\\
    (x,i) \not\models\varphi &\iff
    \val{x,\varphi} [i] = 0.
  \end{align*}
\end{theorem}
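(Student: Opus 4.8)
The plan is to prove the first equivalence, $(x,i)\models\varphi \iff \val{x,\varphi}[i] > 0$, by structural induction on the positive-normal-form formula $\varphi$, and then to obtain the second equivalence for free. The key enabling fact, which I would establish first as a separate induction, is that every measure is bounded: $0 \le \val{x,\varphi}[i] \le 1$ for all $\varphi$ and $i$. Atoms and negated atoms take values in $\{0,1\}$; $\max$ and $\min$ preserve $[0,1]$; the convolutions defining $\F_I$ and $\opP_I$ are non-negative combinations against the normalized kernels $w^\pm_I$, hence remain in $[0,1]$; and the $\until_I/\since_I$ clauses are normalized sums of products of $[0,1]$-valued factors, so they too stay in $[0,1]$. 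Given non-negativity, the events $\val{x,\varphi}[i]>0$ and $\val{x,\varphi}[i]=0$ are exhaustive and mutually exclusive, so the second equivalence $(x,i)\not\models\varphi \iff \val{x,\varphi}[i]=0$ is just the logical complement of the first.

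Non-negativity is not merely bookkeeping; it is the crux that rules out cancellation. Because all summands and factors are non-negative, a finite sum of such terms is strictly positive if and only if at least one term is strictly positive, and a product is strictly positive if and only if both factors are. This is precisely what lets the positivity of a convolution encode an existential quantifier, and the positivity of a $\min$ encode a universal one.

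With this in hand, I would discharge the inductive cases by reducing ``$\val{x,\varphi}[i]>0$'' to the matching classical clause via the induction hypothesis. For $p$ and $\neg p$ the value is $x_p[i]$ and $1-x_p[i]$ respectively, agreeing with the base clauses. For $\varphi\vee\psi$ and $\varphi\wedge\psi$, the $\max$ (resp.\ $\min$) of two non-negative values is positive iff one (resp.\ both) is positive, matching disjunction (resp.\ conjunction). For $\F_I\varphi$ and $\opP_I\varphi$, the convolution $\sum_j \val{x,\varphi}[j]\,w^\pm_I[i-j]$ has a kernel that is strictly positive exactly for $j\in(i\pm I)\cap\mathbb{T}$, so by the cancellation-free argument it is positive iff $(x,j)\models\varphi$ for some such $j$ --- the existential in the classic $\F_I/\opP_I$. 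For $\G_I\varphi$ and $\opH_I\varphi$, the value $\min_{j\in i\pm I}\val{x,\varphi}[j]$ is positive iff every relevant value is positive, matching the universal quantifier.

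The main obstacle is the $\until_I$ and $\since_I$ cases together with the degenerate intervals. For until, the measure $\frac{1}{|I|}\sum_{j\in I}\val{x,\G_{[1,j-1]}\varphi}[i]\cdot\val{x,\F_{\{j\}}\psi}[i]$ is positive iff for some offset $j\in I$ both $\G_{[1,j-1]}\varphi$ and $\F_{\{j\}}\psi$ hold at $i$; that this disjunction over $j$ is exactly the classical $\varphi\until_I\psi$ was already verified for the qualitative semantics, so I would invoke that decomposition rather than re-derive it. What requires care is the boundary conventions: when $j=1$ the range $[1,j-1]=[1,0]$ is empty, so $\G_{[1,0]}\varphi$ must evaluate to $1$ (vacuous truth, the multiplicative unit) for the clause to reduce to $\F_{\{1\}}\psi$; similarly, an empty $(i+I)\cap\mathbb{T}$ makes the $\F_I$ convolution an empty sum equal to $0$ (matching a false existential), whereas a $\min$ over an empty index set must be read as $1$ (matching vacuous universal truth for $\G_I$). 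Checking that these empty-range conventions for $\sum$ and $\min$ line up with the classical vacuous-truth conventions is the one genuinely delicate point; every other case is a routine consequence of non-negativity.
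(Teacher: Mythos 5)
Your proposal is correct and takes essentially the same approach as the paper's own (sketch) proof: a case-by-case argument that strict positivity of the quantitative measure coincides with qualitative, hence classical, satisfaction, with until and since handled through their decomposition into finally/globally (resp.\ once/historically). Your explicit non-negativity (no-cancellation) lemma and the empty-interval conventions simply make rigorous what the paper's sketch leaves implicit.
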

\begin{proof} (Sketch)
  First, as shown in Theorem~\ref{thm:qualiDTs}, the qualitative
  discrete-time semantics in terms of LTI-filtering over the
  idempotent dioid $(\max,\min,0,1)$, is the same as the classical 
MTL
  semantics. Second, the interpretation of Boolean MTL formulas $p$,
  $\neg p$, $\varphi\vee\psi$, and $\varphi\wedge\psi$ over $x$
  is the same in both the qualitative and quantitative
  semantics. Third, the interpretation of the temporal MTL formulas
  $\G_I$ and $\opH_I$ over $x$ is also the same. Fourth, the
  interpretation of MTL formulas $\F_I$ and $\opP_I$ ensure that they
  are strictly positive only if they are satisfied by $x$ and 0,
  otherwise. Fifth, the interpretation of $\until_I$ and $\since_I$
  are properly derived from $\G_I$ and $\F_I$, $\opH_I$ and $\opP_I$,
  respectively.
\end{proof}

In order to illustrate the relation between the qualitative and the 
quantitative, discrete-time semantics, consider the example
shown in Figure~\ref{fig:thm_d}. In Figure~\ref{fig:thm_d}(A) we show
the signal $x_{p}$.  This signal is defined over the set
$\{0,\ldots,30\}$. In Figure~\ref{fig:thm_d}(B) we illustrate the
quantitative semantics of the formula $\opP_{[3,4]}p$ with respect to
$x$, and in Figure~\ref{fig:thm_d}(C) we show the qualitative
semantics of the same formula with respect to $x$. As
Figures~\ref{fig:thm_d}(B-C) show, the two are in agreement, that is
whenever the quantitative semantics is strictly greater than 0, the
formula is satisfied, and whenever the quantitative semantics is 0,
the formula is not satisfied.

In Figure~\ref{fig:since_d} we illustrate the interplay between the
once and the historically operators as they occur in the definition 
of 
the since operator. Note that the outside sum also plays the role of 
an enclosing temporal operator. In Figure~\ref{fig:since_d}(A) and 
(B)
we show the discrete-time signals $x_{p}$ and $x_q$, respectively. In
Figure~\ref{fig:since_d}(C) we show the quantitative semantics of
$p\since_{[2,4]}q$ with respect to $x_{p}$ and $x_q$. Finally in
Figure~\ref{fig:since_d}(D), we show the qualitative semantics of the
same formula with respect to $x_{p}$ and $x_q$. As one can see from
Figure~\ref{fig:since_d}(C-D), the quantitative semantics is sound.
\subsubsection{Continuous-time semantics}
%
In the continuous-time semantics, we restrict ourselves to piecewise 
constant \textit{cadlag signals}~\cite{nickovic06}. Such 
signals are 
right-continuous and have left limits everywhere. As a consequence, 
we do not consider signals that may have a distinct value in some 
isolated point.
Since this property has to be preserved by temporal operators, we 
adopt the MTL fragment from~\cite{nickovic06}, where temporal 
modalities are restricted to be closed intervals only.
Assume we are given a signal $x$ with domain $[0,T)$. 
Then the quantitative semantics for MTL is defined as follows:
\begin{align*}
&\val{ x,p } (i) && = &&x_p(i) \\
&\val{ x,\neg p } (i) && = &&1 - \val{ x,p }(i) \\
&\val{ x, \varphi\,{\vee}\,\psi } (i) &&=&& 
\max(\val{ x,\varphi } (i),\ \val{ x,\psi }(i))\\
&\val{ x, \varphi\,{\wedge}\,\psi } (i) &&= &&
\min(\val{ x,\varphi } (i),\ \val{ x,\psi }(i))\\
&\val{ x, {\F}_{I}\varphi } (i) &&=&& 
\int\limits_{\mathbb{T}} \val{ x,\varphi}(j)\cdot w^+_I(i{-}j) 
\ \mathrm{d}j\\
&\val{ x, {\opP}_{I}\varphi } (i) &&=&& 
\int\limits_{\mathbb{T}} \val{ x,\varphi}(j)\cdot w^-_I(i{-}j) 
\ \mathrm{d}j \\
&\val{ x, {\G}_{I}\varphi } (i) &&=&& 
\inf_{j\,{\in}\,i+I} \val{ x,\varphi}(j)\\
&\val{ x,{\opH}_{I}\varphi } (i) &&= &&
\inf_{j\,{\in}\,i-I} \val{ x,\varphi}(j) \\
&\val{ x,\varphi\until_I\psi }(i)&& =&& 
\frac{1}{|I|}\int\limits_I
\val{ x,{\G}_{(0,j)}\varphi }(i)\cdot 
\val{ x,{\F}_{\{j\}}\psi    }(i)
\mathrm{d}j\\
&\val{ x,\varphi\since_I\psi} (i) &&= &&
\frac{1}{|I|}\int\limits_I
\val{ x,{\opH}_{(0,j)}\varphi }(i)\cdot 
\val{ x,{\opP}_{\{j\}}\psi    }(i)
\mathrm{d}j,
\end{align*}
where the window superscripts $+$ and $-$ correspond to the direction
of the window according to (\ref{def:win}).  
As usual, the $\inf$ operation over an empty set is assumed to be 1.

Like in the discrete-time case, it is relatively straight-forward to
show that the quantitative semantics of an MTL formula $\varphi$ with
respect to a continuous-time signal $x$, is sound.
\begin{figure}
	\epsfig{file=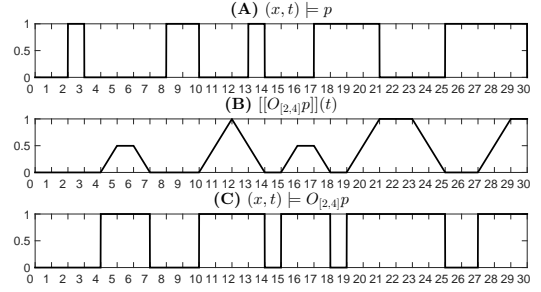, width=\linewidth}
	\caption{Soundness of continuous-time semantics.} 
	\label{fig:theorem_fig}
\end{figure}

\begin{theorem}[Soundness]
Let $\varphi$ be a positive normal form MTL formula and $x$ be a 
continuous-time signal.
Then the following properties hold:
	\begin{alignat*}{2}
	&\val{x,\varphi} (i) > 0 \quad &&\Longrightarrow \quad 
	(x,i) \models\varphi,\\
	&(x,i) \not\models\varphi \quad &&\Longrightarrow \quad 
	\val{x,\varphi} (i) = 0.
	\end{alignat*}
\end{theorem}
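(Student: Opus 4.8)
The plan is to prove the first implication, $\val{x,\varphi}(i) > 0 \Rightarrow (x,i)\models\varphi$, by structural induction on $\varphi$, and to observe at the outset that the second implication is nothing but its contrapositive: every clause of the quantitative semantics returns a nonnegative value, so $\val{x,\varphi}(i)$ is either $0$ or strictly positive, and therefore $(x,i)\not\models\varphi \Rightarrow \val{x,\varphi}(i) = 0$ says exactly the same thing as $\val{x,\varphi}(i) > 0 \Rightarrow (x,i)\models\varphi$. Hence a single induction suffices. Working in positive normal form is essential here, because the quantitative semantics only defines negation on atoms and deliberately gives up the $\F_I$/$\G_I$ (and $\opP_I$/$\opH_I$) duality, so there is no clause to invoke on a negated compound formula.

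For the base cases $p$ and $\neg p$, positivity of $\val{x,p}(i) = x_p(i)$ (respectively $1 - x_p(i)$) is by definition equivalent to $(x,i)\models p$ (respectively $\neg p$), so these hold as full equivalences. For $\varphi\vee\psi$ and $\varphi\wedge\psi$, positivity of $\max$ forces at least one argument positive while positivity of $\min$ forces both; applying the induction hypothesis to $\varphi$ and $\psi$ and reading off the Boolean clauses of the classic semantics closes these cases. For $\G_I$ and $\opH_I$, if the defining infimum is strictly positive then every $\val{x,\varphi}(j)$ with $j\in i\pm I$ is at least that positive infimum, hence positive, and the induction hypothesis yields $(x,j)\models\varphi$ for all such $j$, which is exactly the classic clause (the convention $\inf\emptyset = 1$ matching vacuous satisfaction over an empty window).

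The temporal cases $\F_I$, $\opP_I$ and the two binary modalities carry the real work. For $\F_I$ the value is $\int_{\mathbb{T}}\val{x,\varphi}(j)\,w^+_I(i-j)\,\mathrm{d}j$, an integral of a nonnegative integrand; a strictly positive integral of a nonnegative function is positive on a set of positive measure, in particular at some point $j_0$. At such a point both $w^+_I(i-j_0) > 0$ --- which by definition of $w^+_I$ means $j_0\in(i+I)\cap\mathbb{T}$ --- and $\val{x,\varphi}(j_0) > 0$, so the induction hypothesis gives $(x,j_0)\models\varphi$ and hence $(x,i)\models\F_I\varphi$; $\opP_I$ is symmetric. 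For $\varphi\until_I\psi$ the integrand is the product $\val{x,\G_{(0,j)}\varphi}(i)\cdot\val{x,\F_{\{j\}}\psi}(i)$, and again a positive integral yields a witness $j_0\in I$ at which both factors are positive. Using the sifting property of the Dirac kernel $w_{\{j_0\}}$ one reads $\val{x,\F_{\{j_0\}}\psi}(i) = \val{x,\psi}(i+j_0)$, whose positivity gives $(x,i+j_0)\models\psi$ by induction, while positivity of the infimum $\val{x,\G_{(0,j_0)}\varphi}(i)$ gives $(x,k)\models\varphi$ for every $k\in(i,i+j_0)$ as in the globally case. Together these reconstruct precisely the classic witness for $(x,i)\models\varphi\until_I\psi$; $\since_I$ is the past mirror.

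I expect the measure-theoretic step to be both the crux and the reason the converse cannot be recovered here, unlike in the discrete case: if $\psi$ holds only on a measure-zero subset of the window, the defining integral vanishes although the formula is classically satisfied, which is exactly why only the stated one-sided implications hold. The restriction to piecewise-constant cadlag signals keeps all integrands measurable and the relevant positive sets equal to finite unions of intervals, so the passage from positive measure to an actual witness point is unproblematic; this is the only place the signal-class hypothesis is needed.
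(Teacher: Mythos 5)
Your proof is correct and takes essentially the same route as the paper's (much terser) sketch: a case analysis over the positive-normal-form syntax in which $\inf$ and the integral properly extend $\min$ and the sum, singular intervals are handled via the sifting property of the Dirac kernel, and punctual (measure-zero) satisfaction is exactly the paper's stated reason for dropping the converse implication. Your additional observations---that nonnegativity of the quantitative semantics makes the second implication the contrapositive of the first, and that a strictly positive integral of a nonnegative integrand yields a witness point inside the window---simply fill in details that the paper's sketch leaves implicit.
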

\begin{proof} (Sketch)
  The proof goes along the same lines as the one for the 
  discrete-time semantics, but with a set of complementary remarks. 
  First, one has to observe that $\inf$ operator properly
  extends $\min$ and the integral properly extends the sum operator.
  Second, the
  Dirac-$\delta$ is the proper substitute of the Kronecker-$\delta$
  within the integral.
  Third, if the MTL formula is valid only in a punctual interval 
  within the given time constraints, it is not allowed to measure the 
  normalized number of times the formula is satisfied within the 
  window. In this case the quantitative 
  measure is equal to zero and it is essential to drop the contrary 
  claim. 
\end{proof}

In order to illustrate the quantitative, continuous-time semantics 
for
various temporal operators, and their relation to the qualitative,
continuous-time semantics, consider the example shown in
Figure~\ref{fig:theorem_fig}. In Figure~\ref{fig:theorem_fig}(A) we
show the signal $x_{p}$.  This signal is defined over the interval
$[0, 30)$.  
In Figure~\ref{fig:theorem_fig}(B) we illustrate the
quantitative semantics of the formula $\opP_{[2,4]}p$ with respect to
$x$, and in Figure~\ref{fig:theorem_fig}(C) we show the qualitative
semantics of the same formula with respect to $x$. As
Figures~\ref{fig:theorem_fig}(B-C) show, the two are in agreement,
that is whenever the quantitative semantics is strictly greater than
0, the formula is satisfied, and whenever formula is not satisfied, 
the quantitative semantics is 0.

\newdef{remark2}{Remark}
\begin{remark2}
Note that for a singular interval $I=\{a\}$, the semantics of
$\F_{I}\varphi$ or $\opP_{I}\varphi$ is not necessarily zero 
(although
a point has zero area), because this interval is represented by the
Dirac-$\delta$ distribution which is an ``infinite'' value at $a$:
\begin{align*}
\val{x, {\opP}_{\{a\}} \varphi}(t)
&=\int\limits_{0}^{T}\val{x, \varphi}(s)\cdot 
w^-_{\{a\}}(t-s)\ \mathrm{d}s = \val{x, \varphi}(t- a), \\
\val{x, {\F}_{\{a\}} \varphi}(t)
&=\int\limits_{0}^{T}\val{x, \varphi}(s)\cdot 
w^+_{\{a\}}(t-s)\ \mathrm{d}s = \val{x, \varphi}(t+ a).
\end{align*}
Moreover, the semantics of $\G_I$ and $\opH_I$ is also not 
zero, because of the infimum operator over a single point. As a
consequence, we have that $\G_I\,{\equiv}\,\F_I$ and
$\opH_I\,{\equiv}\,\opP_I$.
\end{remark2}

\begin{figure}
	\centering
	\epsfig{file=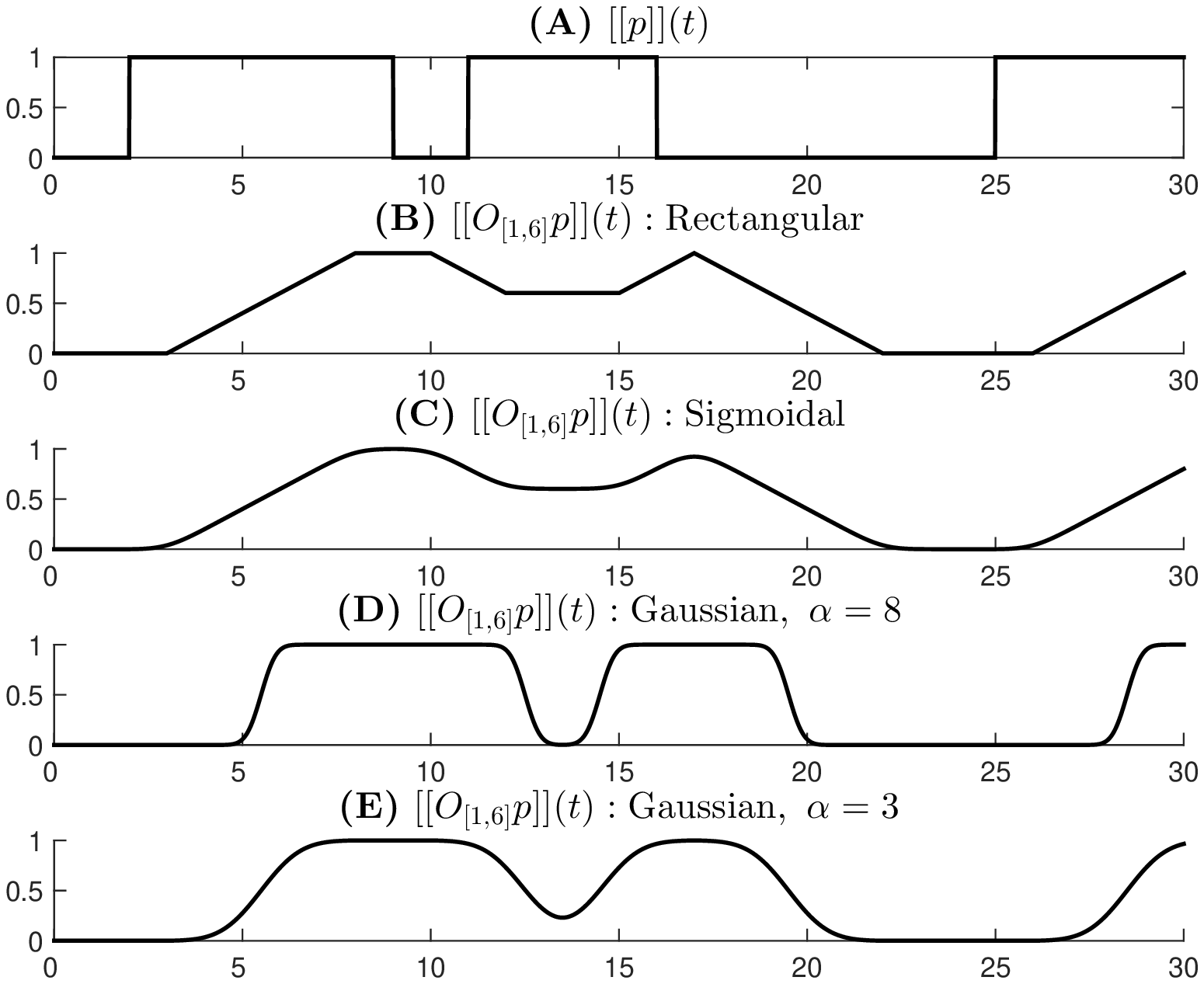, scale=0.42}
	\caption{Smooth evaluation of $\val{\opP_{[1, 6]} p 
		}\left[t\right]$.}
	\label{fig:vs_long16}
\end{figure}
In order to illustrate the effect of applying various smooth kernels
to a signal $x_p$, consider the example shown in
Figure~\ref{fig:vs_long16}. In Figure~\ref{fig:vs_long16}(A) we
show the signal $x_{p}$. In Figure~\ref{fig:vs_long16}(B) we
illustrate the quantitative semantics with respect to a square 
kernel.
In Figure~\ref{fig:vs_long16}(C) we show the same semantics with
respect to a sigmoidal window. This automatically adds tolerance to a
time-jitter at the boundaries of the kernel. Finally, in
Figures~\ref{fig:vs_long16}(D-E) we show the result of applying
Gaussian kernels, with reciprocal of standard deviation 8 and 3,
respectively. Note how they influence the domain of satisfaction.

%
In Figure~\ref{fig:since_c} we illustrate the since 
operator with respect to continuous-time signals $x_p$ and $x_q$. In
Figures~\ref{fig:since_c}(A-B) we show these two signals. In
Figure~\ref{fig:since_c}(C) we show the quantitative semantics of
$p\since_{[1,3]}q$ with respect to $x_{p}$ and $x_q$. Finally in
Figure~\ref{fig:since_c}(D), we show the qualitative semantics of the
same formula with respect to $x_{p}$ and $x_q$. As one can see from
Figures~\ref{fig:since_c}(C-D), the quantitative semantics is sound.

\begin{figure}
	\centering
	\epsfig{file=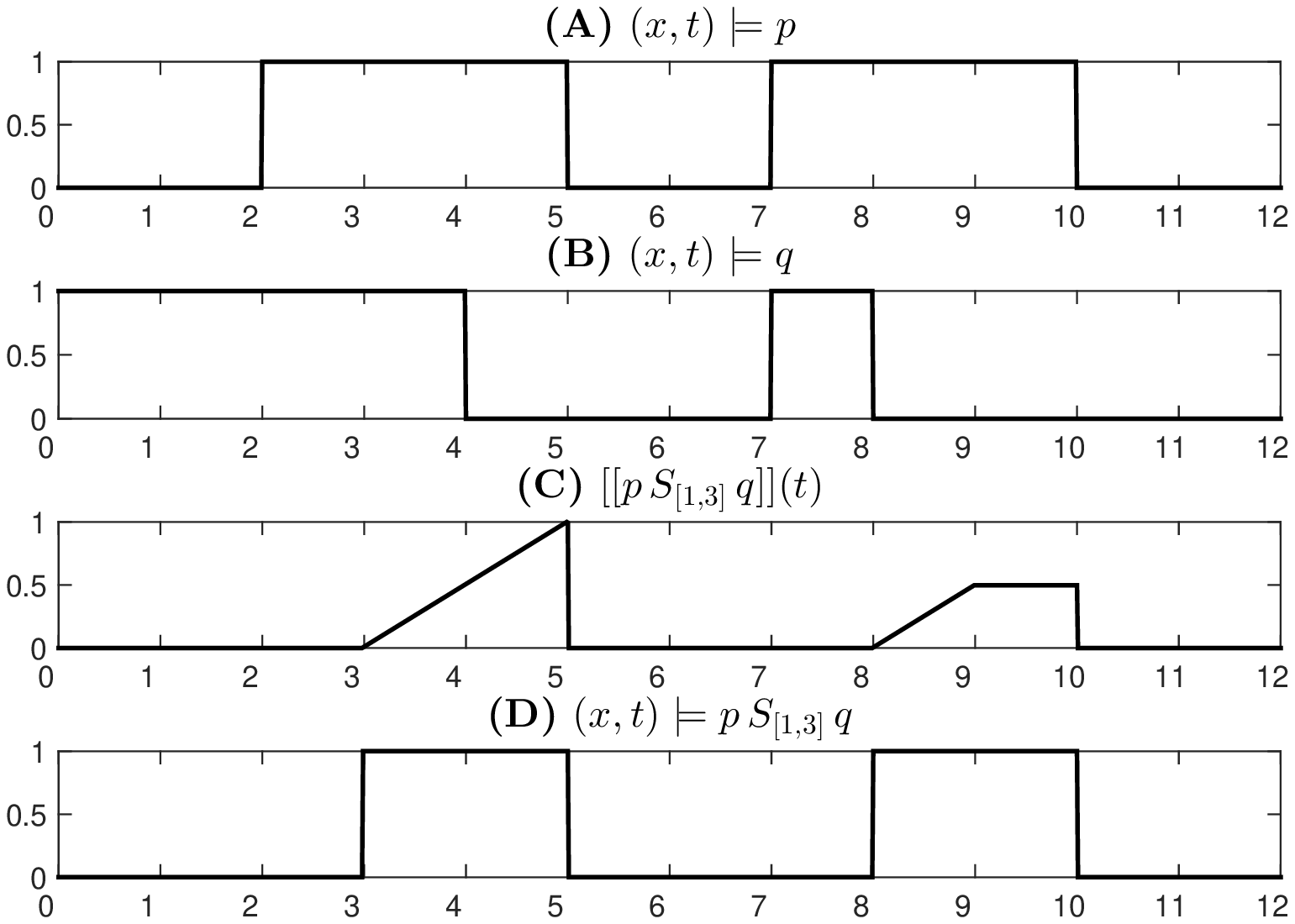, width=\linewidth}
	\caption{Cont.~quantitative semantics for $\since_I$.}
	\label{fig:since_c}
\end{figure}

\section{Conclusions}
\label{sec:conclusions}
We have shown that linear, time-invariant (LTI) filtering corresponds
to metric temporal logic (MTL) if addition and multiplication are
interpreted as $\max$ and $\min$, and if true and false are interpreted as
one and zero, respectively.

We have also provided a quantitative semantics to temporal MTL formula
with respect to a discrete- or continuous-time signal, measuring the
normalized, maximum number of times, the formula is satisfied within
its associated window. This semantics is sound, in the sense that, if
its measure is strictly greater than zero, then the formula is
satisfied.

In future work we would like to explore alternative quantitative
semantics for MTL, which are more informative, in case the MTL formula
is not satisfied. We would also like to explore the logical meaning of
other types of LTI-filter kernels, such as band-pass and
high-pass. Moreover, we would like to investigate how the
correspondence between MTL and LTI-filters can be exploited in order
to build very efficient MTL monitors, using digital signal processors
(DSPs). These are specialized microprocessors, optimized for filtering
and compression operations in signal processing.



\section{Acknowledgments}

This work was partially supported by the Austrian FFG project HARMONIA
(nr. 845631), the Doctoral Program Logical Methods in Computer Science
funded by the Austrian FWF, the Austrian National Research Network
(nr. S 11405-N23 and S 11412-N23) SHiNE funded by the Austrian Science
Fund (FWF), the EU ICT COST Action IC1402 on Runtime Verification
beyond Monitoring (ARVI), the US National-Science-Foundation Frontiers
project CyberCardia, the MISTRAL project A-1341-RT-GP coordinated by 
the European Defence Agency (EDA) and funded by 
the Joint Investment Programme on Second Innovative Concepts and 
Emerging Technologies (JIP-ICET 2).
The authors also would like to acknowledge C\'{e}sar S\'{a}nchez from 
IMDEA 
Software Institute for helpful insight and comments that greatly 
improved the paper. 
\bibliographystyle{abbrv}
\bibliography{chapters/sigproc}  
\end{document}